\definecolor{Matlab_blue}{rgb}{0, 0.4470, 0.7410} 
\definecolor{Matlab_orange}{rgb}{0.8500, 0.3250, 0.0980} 
\definecolor{Matlab_yellow}{rgb}{0.9290, 0.6940, 0.1250} 
\definecolor{green_dark}{rgb}{0.0, 0.5, 0.0}
\newcommand{\Nset}{\mathbb{N}}
\newcommand{\Rset}{\mathbb{R}}
\newcommand{\Zset}{\mathbb{Z}}
\newcommand{\Uset}{\mathbb{U}}
\newcommand{\bu}{\mathbf{u}}
\newcommand{\bz}{\mathbf{z}}
\newcommand{\col}{\operatorname{col}}
\newcommand{\figurename}[1]{Fig.#1}
\newtheorem{theorem}{Theorem}[section]
\newtheorem{problem}[theorem]{Problem}
\newtheorem{lemma}[theorem]{Lemma}
\newtheorem{definition}[theorem]{Definition}
\newtheorem{remark}[theorem]{Remark}
\newtheorem{assumption}[theorem]{Assumption}
\begin{document}

\def\BibTeX{{\rm B\kern-.05em{\sc i\kern-.025em b}\kern-.08em
    T\kern-.1667em\lower.7ex\hbox{E}\kern-.125emX}}
\markboth{\journalname, VOL. XX, NO. XX, XXXX 2017}
{Author \MakeLowercase{\textit{et al.}}: Preparation of Papers for IEEE Control Systems Letters (August 2022)}

\title{Kernelized offset--free data--driven predictive control for nonlinear systems}

\author{
Thomas de Jong$^{1}$,
Mircea Lazar$^{1}$
\thanks{$^{1}$~Department of Electrical Engineering,
        Eindhoven University of Technology, 5612 AZ Eindhoven, The Netherlands. E-mails of the authors: \texttt{t.o.d.jong@tue.nl, m.lazar@tue.nl}.}
}

\maketitle
\thispagestyle{empty}

%%%%%%%%%%%%%%%%%%%%%%%%%%%%%%%%%%%%%%%%%%%%%%%%%%%%%%%%%%%%%%%%%%%%%%%%%%%%%%%%
\begin{abstract}
This paper presents a kernelized offset-free data-driven predictive control scheme for nonlinear systems. Traditional model-based and data-driven predictive controllers often struggle with inaccurate predictors or persistent disturbances, especially in the case of nonlinear dynamics, leading to tracking offsets and stability issues. To overcome these limitations, we employ kernel methods to parameterize the nonlinear terms of a velocity model, preserving its structure and efficiently learning unknown parameters through a least squares approach. This results in a offset-free data-driven predictive control scheme formulated as a nonlinear program, but solvable via sequential quadratic programming. We provide a framework for analyzing recursive feasibility and stability of the developed method and we demonstrate its effectiveness through simulations on a nonlinear benchmark example.
\end{abstract}

\begin{IEEEkeywords}
Data-Driven Predictive Control, Nonlinear Systems, Kernel Methods, Recursive Feasibility, Stability.
\end{IEEEkeywords}

\section{Introduction}
\label{sec:introduction}
\IEEEPARstart{M}{odel} predictive control (MPC) is a control strategy that optimizes predicted system behavior, while ensuring system constraints. Model inaccuracies and/or persistent disturbances require offset-free design methods for MPC, which include incremental/velocity state-space models or augmented disturbance models, see, e.g., the overview \cite{PANNOCCHIA2015}. For nonlinear MPC, velocity state-space models were developed in \cite[Chapter~4]{cisneros2021quasi} via the multivariable mean value theorem. 

Recently, data-driven approaches to linear predictive control, such as subspace predictive control (SPC) \cite{FAVOREEL19994004} and data-enabled predictive control (DeePC) \cite{coulson2019data} have become popular due to a rising interest in data-driven control. In \cite{lazar2022offset}, it was shown that  using incremental inputs in the SPC or DeePC design yields offset-free data-driven predictive controllers with similar performance as offset-free linear MPC. A first result in this direction for nonlinear systems was presented in \cite{cisneros2020}, where finite-dimensional Koopman operators were used to approximate the nonlinear dynamics. A velocity-form prediction model was subsequently derived therein by applying the approach in \cite[Chapter~4]{cisneros2021quasi} to the Koopman approximate model. Recently, in \cite{lazar2024basis} it was shown that general methods to construct basis functions, such as neural networks or kernel methods, can be used to design data-driven predictive controllers for nonlinear systems, without knowledge of the system dynamics. 

Among different methods to construct basis functions, kernel methods offer a well-posed formulation in reproducing kernel Hilbert spaces (RKHSs) \cite{scholkopf2001generalized}, providing a tractable alternative to neural networks for system identification and data-driven analysis/control, see, e.g., \cite{Kernels_control_Claudio, Timm_Ker_Lemma}. Within the data-driven predictive control field, kernels were originally used in \cite{Maddalena_KPC} to learn multi-step predictors for nonlinear systems, in the spirit of the SPC formulation for linear systems. More recently, in \cite{huang2023robust}, kernels were used to implicitly parameterize multi-step predictors in a kernelized DeePC formulation for nonlinear systems. These formulations, however, do not address the offset-free design problem. In this regard, data-driven velocity-form state-space models for nonlinear systems were originally introduced in \cite{10384139}, based on the fundamental theorem of calculus and linear-parameter-varying (LPV) embeddings. Therein, a  parametrization of the velocity-form state-space model using basis or kernel functions was defined, but formulation of the associated learning problem was not explicitly addressed. Very recently, multi-step data-driven velocity-form models for nonlinear systems were presented in \cite{verhoek2024kernel}, based on input-output LPV representations and kernel functions. %Developing parameterizations that enable efficient learning and online implementation, and methods for analyzing recursive feasibility and closed-loop stability for offset-free nonlinear data-driven predictive control (DPC) remain open research topics, nevertheless.  

In this paper we present a novel kernelized offset-free DPC scheme for nonlinear systems with some desirable features, as follows. Firstly,  we consider the velocity state-space model as in \cite[Chapter 4]{cisneros2021quasi} and we employ kernel functions in a structured way, i.e., by representing each nonlinear function/gradient in the velocity model.  We show that preserving the structure of the velocity model in \cite[Chapter 4]{cisneros2021quasi} leads to a tractable least squares problem for learning the unknown coefficients and yields a kernelized velocity model of the same dimension as the original velocity model. This is more efficient and scalable with the data size compared to the state-of-the-art, e.g., the approach in \cite{cisneros2020} requires solving a \emph{nonlinear} least squares problem to estimate the parameters of the Koopman observables and computing partial derivatives of the resulting Koopman embedding. Secondly, we formulate the kernelized offset-free DPC problem as a parameterized nonlinear program, which can be solved via a sequence of quadratic programs (QPs), as done in \cite{cisneros2021quasi} for an analytic velocity form model. Lastly, we present terminal cost and constraint set conditions for recursive feasibility and stability of nonlinear velocity form predictive control, which apply to both analytic and data-driven velocity form representations. This is novel compared to the state-of-the-art \cite[Chapter 4]{cisneros2021quasi}, which only establishes output convergence under a conservative, terminal equality output and incremental state constraint.

%The remainder of the paper is organized as follows. After introducing the preliminaries and problem statement in Section~\ref{sec:preliminaries_and_problem_statement}, we present methods for learning velocity form models from state data in Section~\ref{sec:Learning_Kernelized_Approximate_Models}. We develop stability and recursive feasibility conditions in Section~\ref{sec:stability_feasibility}. Finally, in Section~\ref{sec:example}, we show the effectiveness of our proposed control scheme through a nonlinear example.

\paragraph*{Notation and basic definitions} Let $\mathbb{R}$, $\mathbb{R}_{+}$ and $\mathbb{N}$ denote the set of real numbers, the set of non-negative reals, and the set of non-negative integers. 
For every $c \in \mathbb{R}$ and $\Pi \subseteq \mathbb{R}$ define $\Pi_{\geq c} := \{k \in \Pi | k \geq c \}$. By $\text{Co}(a,b)$ we denote the convex hull of vectors $a,b$. Given a signal $v \in \Rset^{n_v}$, a starting
time $k \geq 0$ and and ending instant $j \geq k + 1$, we define
$\Bar{v}(k, j) = \col (v(k), . . . , v(k + j - 1))$. For a matrix $A \in \Rset^{n\times m}$ we denote by $A^\dag$ the Moore–Penrose inverse of $A$. For a square matrix $A$, we denote by $A\succ 0$ that $A$ is positive definite, and by $A\succeq 0$ that $A$ is positive semi-definite. The identity matrix of dimension $n$ is denoted by $I_n$ and $\mathbf{0}_{n\times m}$ is used to denote a matrix of zeros with $n$ rows and $m$ columns; we omit the subscript when the dimensions are clear from the context. 
\section{Preliminaries \& Problem Statement}\label{sec:preliminaries_and_problem_statement}
We consider nonlinear MIMO systems with inputs $u\in\mathbb{R}^m$, measured states and outputs $x\in\mathbb{R}^n$, $y\in\mathbb{R}^p$, affected by piece-wise constant additive disturbances $d\in\mathbb{R}^{n_d}$, i.e.,
\begin{equation}\label{eqn:state_space}
    \begin{aligned}
         x_{k+1} &= f(x_{k},u_{k}) + E d_k, \quad k\in\mathbb{N}, \\
        y_{k} &= h(x_{k}).
    \end{aligned}
\end{equation}
The functions $f:\mathbb{R}^{n} \times \mathbb{R}^{m} \rightarrow \mathbb{R}^{n}$ and $h:\mathbb{R}^{n} \rightarrow \mathbb{R}^{p}$ are assumed to be \emph{unknown}. Next, we introduce the incremental state and input $\Delta x_k = x_k - x_{k-1}$, $\Delta u_k = u_k - u_{k-1}$ and the extended state $z_k := \col(y_{k-1},\Delta x_k)$. Define the incremental/velocity dynamics as in \cite[Chapter 4]{cisneros2021quasi}:
\begin{equation}\label{eqn:velocity}
    \begin{aligned}
          z_{k+1} &= \underbrace{\begin{bmatrix}
        I & \nabla_x h(\Tilde{x}_{k}) \\
        0 & \nabla_x f(\Tilde{x}_{k},\Tilde{u}_{k})
    \end{bmatrix}}_{A(\Tilde{x}_k,\Tilde{u}_k)} z_{k} + \underbrace{\begin{bmatrix}
        0 \\
       \nabla_u f(\Tilde{x}_{k},\Tilde{u}_{k}) 
    \end{bmatrix}}_{B(\Tilde{x}_k,\Tilde{u}_k)} \Delta u_k, \\
    y_k  &=  \underbrace{\begin{bmatrix}
        I & \nabla_x h(\Tilde{x}_{k}) 
    \end{bmatrix}}_{C(\Tilde{x}_k)} z_{k},
    \end{aligned}
\end{equation}
where $\Tilde{x}_k\in\text{Co}(x_{k},x_{k+1})$ and $\Tilde{u}_k\in\text{Co}(u_{k},u_{k+1})$. The velocity-form representation \eqref{eqn:velocity} is exact as it gives an exact dynamic equation in a space tangent to the original state space (the velocity space) \cite{cisneros2021quasi}. Note that the elements of the matrices in \eqref{eqn:velocity} are in general nonlinear functions. Since analytic expressions for $\Tilde{x}_k$ and $\Tilde{u}_k$ cannot be determined, \cite[Chapter~4]{cisneros2021quasi} employs an approximate velocity-form model based on the assumption that $\Tilde{x}_{k} \approx x_k$ and $\Tilde{u}_{k} \approx u_k$ in \eqref{eqn:velocity}. Next we introduce the following instrumental definitions and results on reproducing kernel Hilbert spaces, see e.g., \cite{fasshauer2011positive,paulsen2016introduction}.
% \textcolor{black}{\begin{definition}
% An RKHS over a nonempty set $\mathcal{X}$ is a Hilbert space of functions $f : \mathcal{X} \rightarrow \Rset$, such that for each $x \in \mathcal{X}$, the evaluation functional $E_x f := f(x)$ is bounded. 
% \end{definition}}
\begin{definition}
Given a set $\mathcal{X}$, we will say that $\mathcal{H}$ is a reproducing kernel Hilbert space (RKHS) on $\mathcal{X}$ over $\mathbb{R}$, provided that (\emph{i}) $\mathcal{H}$ is a Hilbert space of functions from $\mathcal{X}$ to $\mathbb{R}$ and (\emph{ii}) for every $y \in \mathcal{X}$, the linear evaluation functional, $E_y : \mathcal{H} \rightarrow \Rset$, defined by $E_y(f) = f(y)$, is bounded.
\end{definition}
\begin{definition}
A symmetric function $K: \mathcal{X} \times \mathcal{X} \rightarrow \Rset$ is called positive semidefinite kernel if, for any $h \in \Nset$
$$
\sum_{i=1}^h \sum_{j=1}^h \alpha_i \alpha_j K(x_i,x_j) \geq 0, \,\, \forall (x_k,\alpha_k)\in \left(\mathcal{X},\Rset \right), k\in [1:h].
$$
The kernel section of $K$ centered at $x$ is $K_x(\cdot) = K(x,\cdot)$, for all $x \in \mathcal{X}$.
\end{definition}

An RKHS is fully characterized by its reproducing kernel. To be specific, if a function $f : \mathcal{X} \rightarrow \mathbb{R}$ belongs to an RKHS $\mathcal{H}$ with kernel $K$, there exists a sequence $\alpha_i K_{x_i}(x)$, such that
$$
f(x) = \lim_{h\rightarrow\infty}\sum_{i=1}^{h}\alpha_i K_{x_i}(x)
$$
for some $\alpha_i \in \Rset, x_i \in \mathcal{X}$ , for all $i \in [1:h]$. Note that $h$ can possibly be a finite number. Next, suppose we are given pair-wise data points 
    $(x_1,y_1), \dots, (x_s,y_s)$, with each pair in $\mathcal{X} \times \Rset$. Then it holds that any $f\in\mathcal{H}$ \emph{minimizing the risk functional }
$$
c(x_1,y_1,f(x_1)),\dots,c(x_s,y_s,f(x_s)) + g(\|f\|_\mathcal{H})
$$
admits a representation  
\begin{equation}\label{eqn:representer_theorem}
    f(\cdot) = \sum_{i=1}^s \alpha_i K(\cdot,x_i),
\end{equation}
where $\|f\|_\mathcal{H}$ is the induced norm on $\mathcal{H}$, $g$ is a strictly monotonically increasing real-valued function on $[0, \infty)$ and $c :(\mathcal{X}\times \Rset^2)^s \rightarrow \Rset \cup \{\infty\}$ is an arbitrary cost function. This result is known as the representer theorem, see e.g., \cite{scholkopf2001generalized}. The case of multiple outputs will be addressed by employing a common, shared RKHS for all outputs. %In the next subsection we will give the problem statement of this work. 

\subsection{Problem Statement}
To formulate the prototype offset-free nonlinear MPC problem, at time $k \in \Nset$, given $y_{k}$, $u_{k}$, $x_{k}$ and $N \in \Nset_{\geq 1}$, define:
\begin{align*}
    \mathbf{u}_k &:= \col(u_{0|k},\dots,u_{N-1|k}), \,
     \mathbf{y}_k := \col(y_{0|k},\dots,y_{N-1|k}), \\
    \mathbf{x}_k &:= \col(x_{0|k},\dots,x_{N-1|k}), \,
    \boldsymbol{\rho}_k := \col(\rho_{0|k},\dots,\rho_{N-1|k}),   \\
        \Delta \mathbf{u}_k &:= \col(\Delta u_{0|k},\dots,\Delta u_{N-1|k}), \, 
    \mathbf{z}_k := \col(z_{1|k},\dots, z_{N|k}). 
\end{align*}
where $\rho_{i|k} := \col(\Tilde{x}_{i|k},\Tilde{u}_{i|k})$. Next, consider the offset-free nonlinear MPC problem:

\begin{problem} (Velocity form offset-free nonlinear MPC) \label{prob:NMPC}
    \begin{subequations}
\label{eq:3:1}
\begin{align}
&\min_{\Delta \mathbf{u}_k,\boldsymbol{\rho}_k,\mathbf{z}_{k} }  \quad   l_N(z_{N|k},r) + \sum_{i=0}^{N-1} l(z_{i|k},r,\Delta u_{i|k})    \label{eq:3:1a} \\ 
&\text{subject to: } \nonumber  \\
&\mathbf{z}_k =  \Psi(\boldsymbol{\rho}_k) z_{0|k} + \Gamma(\boldsymbol{\rho}_k) \Delta \mathbf{u}_k \label{eq:3:1b}\\
 &\rho_{0|k} = \col(x_k,u_{0|k}), \quad z_{0|k} = \col(y_{k-1},\Delta x_k), \label{eq:3:1d}\\
 & (\mathbf{z}_{k}, \Delta \mathbf{u}_{k}) \in \Zset \times \Delta \Uset, \quad (z_{N|k}-r) \in \mathbb{Z}_T. \label{eq:3:1e}
\end{align}
\end{subequations}
\end{problem}

As a reference signal in Problem~\ref{prob:NMPC} we use $r=\col(y_r,\mathbf{0}_{n\times 1})$ where $y_r\in\mathbb{R}$ is a output reference, the terminal cost is $l_N(z_{N|k},r):=(z_{N|k}-r)^\top P(z_{N|k}-r)$, and 
the stage cost is $l(z_{i|k},r,\Delta u_{i|k}):=(z_{i|k}-r)^\top Q(z_{i|k}-r)+ \Delta u_{i|k}^\top R \Delta u_{i|k}$. We assume that $P \succ 0$, $Q \succ 0$ and $R \succ 0$. For an expression of $\Psi$ and $\Gamma$ we refer to Section~IV due to space limits. Since we assume that the functions $f, h$ are unknown, in order to build the velocity form model \eqref{eqn:velocity} and the prediction matrices $\Psi, \Gamma$, we need a data-driven representation of the unknown functions in \eqref{eqn:velocity}, which is formally stated next.
\begin{problem}\label{Problem:1}
Given a finite set of noiseless  state-input-output data $\{x_i,u_i,y_i\}_{i\in[0,s]}$, where $s\in\Nset_{\geq 1}$ denotes the number of samples, the objective is to construct a velocity model of the form \eqref{eqn:velocity} directly from data, i.e., without knowledge of the underlying system dynamics and corresponding gradients. Additionally, the aim is to develop a computationally efficient predictive control algorithm based on the developed data-driven velocity model and to analyze conditions under which recursive feasibility and closed-loop stability are attained.
\end{problem}

\section{Kernelized Velocity State-space Models}\label{sec:Learning_Kernelized_Approximate_Models}
To solve Problem~\ref{Problem:1}, we will utilize the representer theorem to parameterize the nonlinear functions within the velocity state-space model \eqref{eqn:velocity}. Given a finite data set $\{x_i,u_i,y_i\}_{i\in[0,s]}$, consider the kernel-based representations of the unknown functions in \eqref{eqn:velocity}:
\begin{subequations}
\label{eqn:kernel_trick}
\begin{align}
\nabla_x f(\tilde x_k,\tilde u_k) &\approx
    \underbrace{\begin{bmatrix}
         \boldsymbol{\alpha}_{A}^{1,1} & \dots & \boldsymbol{\alpha}_{A}^{1,n} \\
         \vdots & \ddots & \vdots \\
         \boldsymbol{\alpha}_{A}^{n,1} & \dots & \boldsymbol{\alpha}_{A}^{n,n} 
    \end{bmatrix}}_{A_\alpha} \left(I_ n\otimes \Bar{\mathbf{K}}_{xu}(x_k,u_k) \right), \label{eqn:kernel_trick:a}\\
\nabla_u f(\tilde x_k,\tilde u_k) &\approx 
    \underbrace{ \begin{bmatrix}
         \boldsymbol{\alpha}_{B}^{1,1} & \dots & \boldsymbol{\alpha}_{B}^{1,m} \\
         \vdots & \ddots & \vdots \\
         \boldsymbol{\alpha}_{B}^{n,1} & \dots & \boldsymbol{\alpha}_{B}^{n,m} 
    \end{bmatrix}}_{B_\alpha} \left(I_m \otimes \Bar{\mathbf{K}}_{xu}(x_k,u_k) \right), \label{eqn:kernel_trick:b}\\
  \nabla_x h(\tilde x_k)   &\approx
    \underbrace{\begin{bmatrix}
         \boldsymbol{\alpha}_{C}^{1,1} & \dots & \boldsymbol{\alpha}_{C}^{1,n} \\
         \vdots & \ddots & \vdots \\
         \boldsymbol{\alpha}_{C}^{p,1} & \dots & \boldsymbol{\alpha}_{C}^{p,n} 
    \end{bmatrix}}_{C_\alpha} \left(I_n \otimes \Bar{\mathbf{K}}_{x}(x_k)\right), \label{eqn:kernel_trick:c}
\end{align}
\end{subequations}
where $\boldsymbol{\alpha}_{l}^{i,j}:= \col(\alpha_{l,1}^{i,j},\dots,\alpha_{l,s}^{i,j})^\top$ is a row vector of constant coefficients such that $\boldsymbol{\alpha}_{l}^{i,j} \in\Rset^{1\times s}$ for all $l\in\{A,B,C\}$. $\Bar{\mathbf{K}}_{xu}(x_k,u_k):=\col(K_{\col(x_1,u_1)}(\col((x_k,u_k)),\dots,K_{\col(x_s,u_s)}(\col(x_k,u_k)))$ and $\Bar{\mathbf{K}}_{x}(x_k):=\col(K_{x_1}(x_k),\dots,K_{x_s}(x_k))$ are vectors of functions where each element evaluates the kernel function at one of the data points. 
\begin{remark} \label{rem_ker} A more accurate approximation of the velocity dynamics \eqref{eqn:velocity} can be achieved by parameterizing the kernel functions with $x_k$, $x_{k+1}$, $u_k$, and $u_{k+1}$. However, it suffices to parameterize the kernel functions using only $(x_k, u_k, u_{k+1})$, because $x_{k+1}$ is uniquely determined by $(x_k, u_k)$ according to the system dynamics \eqref{eqn:state_space}. Furthermore, since $u_{k+1}$ is computed based on the measured state $x_{k+1}$, the parametrization of the kernel functions can be reduced to ($x_k,u_k$) only.
\end{remark}
% and $u_{k+1}$ is a function of $\Delta u_{k+1}$ and $u_k$. In turn, $\Delta u_{k+1}$ will be computed by solving an optimization problem online as a function of $z_{k+1}$, which reduces to $x_{k+1}$ and $y_k=h(x_k)$. Hence, in principle kernel functions depending only on $(x_k,u_k)$ suffice to represent the nonlinear functions in \eqref{eqn:velocity} depending on $(\tilde x_k, \tilde u_k)$. %This yields less parameters compared to the approach in \cite{10384139}, which utilizes $(x_k,x_{k-1},u_k,u_{k-1})$ as parameters. 

The kernelized representations in \eqref{eqn:kernel_trick} can be used to construct the matrices 
\begin{align*}
    \Hat{A}(x_k,u_k) & = \begin{bmatrix}
        I & C_\alpha \left(I_n \otimes \Bar{\mathbf{K}}_{x}(x_k)\right) \\
        0 & A_\alpha \left(I_n\otimes \Bar{\mathbf{K}}_{xu}(x_k,u_k) \right)
    \end{bmatrix}, \\
     \Hat{B}(x_k,u_k) & = \begin{bmatrix}
        0 \\
        B_\alpha \left(I_m \otimes \Bar{\mathbf{K}}_{xu}(x_k,u_k) \right)
    \end{bmatrix}, \\
     \Hat{C}(x_k) & = \begin{bmatrix}
        I & C_\alpha \left(I_n \otimes \Bar{\mathbf{K}}_{x}(x_k)\right)
    \end{bmatrix},
\end{align*}
where $\Hat{A}(x_k,u_k)$, $\Hat{B}(x_k,u_k)$ and $\Hat{C}(x_k)$ are the kernel based approximations of $A(\Tilde{x}_k,\Tilde{u}_k)$, $B(\Tilde{x}_k,\Tilde{u}_k)$ and $C(\Tilde{x}_k)$ in \eqref{eqn:velocity}. 

This yields the kernelized data-driven velocity model
\begin{equation}\label{eqn:model}
    \begin{aligned}
    \Hat{z}_{k+1} &= \Hat{A}(x_k,u_k)\Hat{z}_{k} + \Hat{B}(x_k,u_k)\Delta u_k, \\
    \Hat{y}_{k} &= \Hat{C}(x_k)\Hat{z}_{k},
    \end{aligned}
\end{equation}
where it should ideally hold that $\Hat{z}_{k+1}= z_{k+1}$ and $\hat y_k=y_k$ for $\Hat{z}_{k} = z_{k}$. Based on Remark~\ref{rem_ker}, in theory this is possible if the functions on the left hand side of the equations \eqref{eqn:kernel_trick} belong to the corresponding finite dimensional RKHS. 
\begin{remark}
    The number of columns of the matrices of coefficients $A_\alpha$, $B_\alpha$ and $C_\alpha$ and the number of rows of $\left(I_{n/m} \otimes \Bar{\mathbf{K}}_{xu}(x_k,u_k)\right)$ and $\left(I_n \otimes \Bar{\mathbf{K}}_{x}(x_k)\right)$ scale linearly with the number of data samples. However, the dimensions of $\Hat{A}$, $\Hat{B}$ and $\Hat{C}$ are the same as the dimensions of $A$, $B$ and $C$ in \eqref{eqn:velocity}. This means that the kernelized data-driven velocity model \eqref{eqn:model} yields the same computational complexity when employed in a predictive control scheme, as the original velocity model \eqref{eqn:velocity}.
\end{remark}

In order to learn the unknown matrices of coefficients $A_\alpha$, $B_\alpha$ and $C_\alpha$, we set $\hat{z}_k = z_k$ in \eqref{eqn:model}, which allows us to rewrite the state equation as two separate equations, i.e., 
\begin{align}
    \Delta  \Hat{y}_{k}&=\Hat{y}_k - \Hat{y}_{k-1}= C_\alpha\begin{bmatrix}
\left(I_n \otimes \Bar{\mathbf{K}}_{x}(x_k) \right)\Delta x_k 
\end{bmatrix}, \label{eqn:opt1} \\
\Delta  \Hat{x}_{k+1} & =  \begin{bmatrix} A_\alpha &  B_\alpha \end{bmatrix}\begin{bmatrix}
\left(I_n \otimes \Bar{\mathbf{K}}_{xu}(x_k,u_k)\right) \Delta x_k  \\
\left(I_m \otimes \Bar{\mathbf{K}}_{xu}(x_k,u_k)\right) \Delta u_k
\end{bmatrix}. \label{eqn:opt2}
\end{align}
For brevity, for any $k\in\Nset$ define
\begin{align*}
\mathbf{K}^y_k &:= \begin{bmatrix}
\left(I_n \otimes \Bar{\mathbf{K}}_{x}(x_k) \right)\Delta x_k \\ 
\end{bmatrix}, \\
\mathbf{K}^x_k &:= \begin{bmatrix}
\left(I_n \otimes \Bar{\mathbf{K}}_{xu}(x_k,u_k)\right) \Delta x_k  \\
\left(I_m \otimes \Bar{\mathbf{K}}_{xu}(x_k,u_k)\right) \Delta u_k
\end{bmatrix}.
\end{align*} 
Given a set of noiseless state-input-output data $\{x_i,u_i,y_i\}_{i\in[0,s]}$ and predicted $\{\Hat{x}_i,\Hat{y}_i\}_{i\in[0,s]}$ counterparts corresponding to the kernelized model \eqref{eqn:model}, define 
{\small
\begin{align*}
    \Delta \mathbf{Y}^+ &:= \begin{bmatrix}
             \Delta y_1 & \dots & \Delta y_{s-1}
         \end{bmatrix}, \Delta \Hat{\mathbf{Y}}^+ := \begin{bmatrix}
             \Delta \Hat{y}_1 & \dots & \Delta \Hat{y}_{s-1}
         \end{bmatrix}, \\
         \Delta \mathbf{X}^+ &:= \begin{bmatrix}
             \Delta x_2 & \dots & \Delta x_s
         \end{bmatrix}, \Delta \Hat{\mathbf{X}}^+ := \begin{bmatrix}
             \Delta \Hat{x}_2 & \dots & \Delta \Hat{x}_s
         \end{bmatrix}, \\
          \mathbf{Z}^+ &:= \begin{bmatrix}
             z_2 & \dots & z_s
         \end{bmatrix}, \Hat{\mathbf{Z}}^+ := \begin{bmatrix}
              \Hat{z}_2 & \dots &  \Hat{z}_s
         \end{bmatrix}.
\end{align*}}
We will use these definitions to optimally find approximations for $A_\alpha$, $B_\alpha$ and $C_\alpha$ in the least-squares sense. 
\begin{lemma}
\label{lem_ker}
Suppose that the collected set of noiseless state-input-output data $\{x_i,u_i,y_i\}_{i\in[0,s]}$ and the chosen kernel functions are such that the matrices $\begin{bmatrix}
    \mathbf{K}^y_1 & \dots &  \mathbf{K}^y_{s-1}
\end{bmatrix}$ and $\begin{bmatrix}
    \mathbf{K}^x_1 & \dots &  \mathbf{K}^x_{s-1}
\end{bmatrix}$ have full rank. Then the matrices 
\begin{align}
     C_\alpha^{LS} &:= 
         \Delta \mathbf{Y}^+  \begin{bmatrix}
         \mathbf{K}^y_1 & \dots &  \mathbf{K}^y_{s-1}
     \end{bmatrix}^{\dag} \label{eqn:LS_solution1}, \\
     \begin{bmatrix} A_\alpha^{LS} & B_\alpha^{LS} \end{bmatrix}&:=  \Delta \mathbf{X}^+ \begin{bmatrix}
         \mathbf{K}^x_1 & \dots &  \mathbf{K}^x_{s-1}
     \end{bmatrix}^{\dag}, \label{eqn:LS_solution2}
\end{align}
yield the  $A_\alpha$, $B_\alpha$ and $C_\alpha$ that minimize $\|\Hat{\mathbf{Z}}^+ - \mathbf{Z}^+\|_2^2$. 
\end{lemma}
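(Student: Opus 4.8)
The plan is to reduce the joint fit to two decoupled matrix least-squares problems and solve each in closed form with the Moore--Penrose inverse. The first step is to make the one-step prediction error $\hat z_{k+1}-z_{k+1}$ explicit under the training condition $\hat z_k=z_k$ used to derive \eqref{eqn:opt1}--\eqref{eqn:opt2}. Substituting $z_k=\col(y_{k-1},\Delta x_k)$ into $\hat z_{k+1}=\hat A(x_k,u_k)z_k+\hat B(x_k,u_k)\Delta u_k$ from \eqref{eqn:model}, the block-triangular structure of $\hat A$ together with the structure of $\hat B$ splits this error into an output block and an incremental-state block, which depend on disjoint coefficient matrices: $C_\alpha$ for the former and $\begin{bmatrix} A_\alpha & B_\alpha\end{bmatrix}$ for the latter.

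I would then handle the output block, which is the one place where care is needed. Its prediction is the top block of $\hat z_{k+1}$, namely $y_{k-1}+C_\alpha\big(I_n\otimes\bar{\mathbf{K}}_x(x_k)\big)\Delta x_k=y_{k-1}+C_\alpha\mathbf{K}^y_k$, while the top block of $z_{k+1}$ is $y_k$. Because $\hat z_k=z_k$ forces $\hat y_{k-1}=y_{k-1}$, the output error reduces to $C_\alpha\mathbf{K}^y_k-(y_k-y_{k-1})=C_\alpha\mathbf{K}^y_k-\Delta y_k$, i.e.\ the coefficients $C_\alpha$ control the output \emph{increment} rather than the output itself, which recovers \eqref{eqn:opt1}. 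The incremental-state block follows directly from \eqref{eqn:opt2}: the bottom block error equals $\begin{bmatrix} A_\alpha & B_\alpha\end{bmatrix}\mathbf{K}^x_k-\Delta x_{k+1}$.

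Finally, stacking the errors over $k=1,\dots,s-1$ and reading $\|\cdot\|_2^2$ as the sum of squared entries gives $\|\hat{\mathbf{Z}}^+-\mathbf{Z}^+\|_2^2=\|C_\alpha\mathbf{K}^y-\Delta\mathbf{Y}^+\|_2^2+\|\begin{bmatrix} A_\alpha & B_\alpha\end{bmatrix}\mathbf{K}^x-\Delta\mathbf{X}^+\|_2^2$, with $\mathbf{K}^y:=\begin{bmatrix}\mathbf{K}^y_1 & \cdots & \mathbf{K}^y_{s-1}\end{bmatrix}$ and $\mathbf{K}^x:=\begin{bmatrix}\mathbf{K}^x_1 & \cdots & \mathbf{K}^x_{s-1}\end{bmatrix}$. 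Since the two terms share no decision variables, the joint minimization decouples, and it remains to apply the standard fact $\arg\min_W\|WM-T\|_2^2=TM^\dag$. I would justify this by transposing to the ordinary form $\min_{W^\top}\|M^\top W^\top-T^\top\|_2^2$, whose minimum-norm solution is $W^\top=(M^\top)^\dag T^\top$, and then using $(M^\top)^\dag=(M^\dag)^\top$ to obtain $W=TM^\dag$; taking $(M,T)=(\mathbf{K}^y,\Delta\mathbf{Y}^+)$ and $(M,T)=(\mathbf{K}^x,\Delta\mathbf{X}^+)$ produces exactly \eqref{eqn:LS_solution1}--\eqref{eqn:LS_solution2}. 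The full-rank hypotheses on $\mathbf{K}^y$ and $\mathbf{K}^x$ make these pseudoinverse fits well defined as the least-squares minimizers; in fact, since these matrices are tall, full column rank makes each residual vanish, so the stated coefficients interpolate the data. The only real obstacle is the bookkeeping of the output block in the second paragraph; once the $z$-error is rewritten through the increments, everything else is textbook linear least squares.
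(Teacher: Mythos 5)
Your proposal is correct and follows essentially the same route as the paper's proof: using the structure of \eqref{eqn:opt1}--\eqref{eqn:opt2} to split $\|\Hat{\mathbf{Z}}^+-\mathbf{Z}^+\|_2^2$ into two decoupled least-squares problems in $C_\alpha$ and $\begin{bmatrix}A_\alpha & B_\alpha\end{bmatrix}$, each solved via the Moore--Penrose pseudoinverse. You simply fill in details the paper leaves implicit (the block-wise error derivation, the transposition argument for $W=TM^\dag$, and the exact-interpolation consequence of full column rank), which strengthens rather than changes the argument.
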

\begin{proof}
By using \eqref{eqn:opt1} and \eqref{eqn:opt2} we can separate the least squares problem $\min\|\Hat{\mathbf{Z}}^+ - \mathbf{Z}^+\|_2^2$ into two separate least squares problems, i.e.,  $\min\|\Delta\Hat{\mathbf{X}}^+ - \Delta\mathbf{X}^+\|_2^2$ and $\min\|\Delta\Hat{\mathbf{Y}}^+ - \Delta\mathbf{Y}^+\|_2^2$. These least squares problems can be solved separately, which yields
\begin{align}
    &\min \| \begin{bmatrix}
        A_\alpha & B_\alpha
    \end{bmatrix} \begin{bmatrix}
         \mathbf{K}^x_1 & \dots &  \mathbf{K}^x_{s-1}
     \end{bmatrix} - \Delta\mathbf{X}^+ \|_2^2,  \\
    & \min \left\| C_\alpha\begin{bmatrix}
         \mathbf{K}^y_1 & \dots &  \mathbf{K}^y_{s-1}
     \end{bmatrix} - \Delta\mathbf{Y}^+\right\|_2^2.
\end{align}
By exploiting suitable pseudoinverse matrices, yields the least squares solutions $A_\alpha^{LS}$, $B_\alpha^{LS}$ and $C_\alpha^{LS}$, which completes the proof.
\end{proof}

A possible approach to attain the full rank property invoked in Lemma~\ref{lem_ker} is to use universal kernel functions \cite{Micchelli_2006} and distinct data points $\{x_i,u_i,y_i\}$.

% \textcolor{red}{\begin{remark}
%     Can we relate this solution to the representer theorem without the regularization term?  
% \end{remark}}

\section{Kernelized Offset-Free DPC}\label{sec:stability_feasibility}
%\subsection{Formulation of Kernelized Nonlinear Data-Driven Predictive Control}
We will now define the velocity kernel based data-driven predictive control (vKDPC) problem. To this end we firstly define the prediction matrices which are obtained by iterating the kernelized velocity dynamics in \eqref{eqn:model}, i.e., 
\begin{align}
   &\hat \Psi(\boldsymbol{\rho}_k) := 
\begin{bsmallmatrix}
    \prod_{i=0}^{0} \hat A(\rho_{i|k}) \\
    \prod_{i=0}^{1} \hat A(\rho_{i|k}) \\
    \vdots \\
    \prod_{i=0}^{N-1} \hat A(\rho_{i|k})
\end{bsmallmatrix}, \label{eqn:prediction_model1} \\
&\hat \Gamma(\boldsymbol{\rho}_k) := \nonumber \\
& \begin{bsmallmatrix}
   \hat B(\rho_{0|k})  & 0  & \dots & 0\\
    \prod_{i=1}^{1} \hat A(\rho_{i|k}) \hat B(\rho_{0|k})  & \hat B(\rho_{1|k}) & \dots & 0 \\
    \vdots & \vdots & \ddots  & 0 \\
    \prod_{i=1}^{N-1} \hat A(\rho_{i|k}) \hat B(\rho_{0|k})   &  \prod_{i=2}^{N-1} \hat A(\rho_{i|k}) \hat B(\rho_{1|k}) & \dots & \hat B(\rho_{N-1|k}) \label{eqn:prediction_model2}
\end{bsmallmatrix}.
\end{align} 
Then the vKDPC problem can be defined as follows.
\begin{problem} (Velocity form offset-free kernelized DPC) \label{prob:KDPC}
    \begin{subequations}
\label{eq:k3:1}
\begin{align}
&\min_{\Delta \mathbf{u}_k,\boldsymbol{\rho}_k,\mathbf{z}_{k} }  \quad   l_N(z_{N|k},r) + \sum_{i=0}^{N-1} l(z_{i|k},r,\Delta u_{i|k})    \label{eq:k3:1a} \\  
&\text{subject to: } \nonumber  \\
&\mathbf{z}_k =  \hat\Psi(\boldsymbol{\rho}_k) z_{0|k} + \hat\Gamma(\boldsymbol{\rho}_k) \Delta \mathbf{u}_k \label{eq:k3:1b}\\
 &\rho_{0|k} = \col(x_k,u_{0|k}), \quad z_{0|k} = \col(y_{k-1},\Delta x_k), \label{eq:k3:1d}\\
 &(\mathbf{z}_{k}, \Delta \mathbf{u}_{k}) \in \Zset \times \Delta \Uset, \quad (z_{N|k}-r) \in \mathbb{Z}_T.\label{eq:k3:1e} 
\end{align}
\end{subequations}
\end{problem}

Problem~\ref{prob:KDPC} is a constrained nonlinear optimization problem, since $\hat \Psi(\boldsymbol{\rho}_k)$ and $\hat \Gamma(\boldsymbol{\rho}_k)$ are matrices where the entries in general depend on nonlinear functions of $\boldsymbol{\rho}_k$. However, a computationally much more efficient approach is to employ a sequential QP formulation, as stated next.
\vspace{-1mm}
\begin{algorithm}[H]
\caption{Efficient Velocity Form Kernelized DPC}\label{alg:vKDPC_LPV}
\begin{algorithmic}[1]
\State \textbf{Input:} set-point $r$, prediction horizon $N$, cost matrices $P$, $Q$, $R$, prediction matrices $\Hat{\Psi}$, $\Hat{\Gamma}$, and tolerance $\varepsilon$
\For{$k = 0, 1, \dots$}
    \State Measure or estimate the current state $x_k$
    \State \textbf{Fix parameters:} Set initial $\boldsymbol{\rho}_k^{(1)} = \boldsymbol{\rho}_k^{(0)} = \col(\rho^*_{1|k-1}, \dots, \rho^*_{N-2|k-1}, \rho^*_{N-1|k-1}, \rho^*_{N-1|k-1}$) 
    \State Initialize iteration index $i \gets 0$
    \While{$\|\boldsymbol{\rho}_k^{(i+1)} - \boldsymbol{\rho}_k^{(i)}\| > \varepsilon$ \textbf{or} $i = 0$ }
        \State Increment iteration index $i \gets i + 1$
        \State Solve Problem~\ref{prob:KDPC} with fixed $\boldsymbol{\rho}_k^{(i)}$ to obtain $\Delta \mathbf{u}_k^*$
        \State Construct $\mathbf{u}^*_k$ from $\Delta \mathbf{u}^*_k$ and $u_{k-1}$
        \State Compute $\boldsymbol{\rho}_k^{(i+1)}$ by simulating \eqref{eqn:model} with $\mathbf{u}_k^*$, and $x_k$
    \EndWhile
    \State Apply the control input: $u_k = u_{k-1} + \Delta u_{0|k}^*$ to \eqref{eqn:state_space}
\EndFor
\end{algorithmic}
\end{algorithm}
\vspace{-3mm}
Next, we analyze the properties of the velocity system dynamics \eqref{eqn:velocity} in closed-loop with the control law obtained by solving Problem~\ref{prob:KDPC} online, at each time $k$ in a receding horizon manner. For simplicity of exposition we assume that the kernelized velocity dynamics $\hat A, \hat B, \hat C$ are an exact approximation of the velocity dynamics $A, B, C$, which further yields that the prediction matrices $\hat \Psi$ and $\hat \Gamma$ are an exact approximation of $\Psi$ and $\Gamma$, respectively. If this is not the case, the robust recursive feasibility and stability analysis method from \cite{dejong2024} can be employed, as explained in Remark~\ref{remark:4.7}. Note that establishing terminal cost and set conditions for closed-loop stability and recursive feasibility of velocity NMPC/KDPC is still non-trivial; i.e., in \cite[Chapter 4]{cisneros2021quasi} a more conservative,   output terminal equality constraint was used. 
 
\begin{assumption}\label{assm:TerminalCost_and_Set}
Given a reference $r=\col(y_r,\mathbf{0}_{n\times 1})$, there exists a locally stabilizing state-feedback control law $\Delta u_k:=K(z_k-r)$ (which gives $u_k:=u_{k-1} + K(z_k-r)$), a $P\succ 0$ and a
compact set 
$\Zset_T$ with $0$ in its interior such that for all $(z^*_{N|k}-r)\in\Zset_T$ it holds that:
\begin{subequations}\label{eqn:conditions}
\begin{align*}
&\textbf{A}_{cl}(\Tilde{\rho}_{N-1|k+1}) \Zset_T \subseteq \Zset_T, \quad K \Zset_T \subseteq  \Delta \mathbb{U}, \quad \Zset_T \subseteq  \Zset, \\
&\textbf{A}^\top_{cl}(\Tilde{\rho}_{N-1|k+1}) P \textbf{A}_{cl}(\Tilde{\rho}_{N-1|k+1}) - P \preceq  -Q - K^\top R K, 
\end{align*}
\end{subequations}
where $\Tilde{\rho}_{N-1|k+1}:=\col(x^*_{N|k},u^*_{N-1|k}+K(z^*_{N|k}-r))$ and $\textbf{A}_{cl}(\Tilde{\rho}_{N-1|k+1}) := A(x^*_{N|k},u^*_{N-1|k} + K(z^*_{N|k}-r)) + B(x^*_{N|k},u^*_{N-1|k} + K(z^*_{N|k}-r)) K$.  
\end{assumption}

\begin{theorem}\label{theorem:RecursiveFeasibility}
Suppose that Assumption~\ref{assm:TerminalCost_and_Set} holds. At any time $k\in\Nset$, given that Problem~\ref{prob:KDPC} is feasible for $\rho_{0|k}=\rho^*_{1|k-1}$, $z_{0|k}=z^*_{1|k-1}$, then Problem~\ref{prob:KDPC} is feasible at time $k+1$ for $\rho_{0|k+1}=\rho^*_{1|k}$, $z_{0|k+1}=z^*_{1|k}$. 
\end{theorem}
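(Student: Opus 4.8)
The plan is to invoke the standard receding-horizon feasibility construction: build an explicit candidate solution at time $k+1$ by shifting the optimal solution at time $k$ and appending the locally stabilizing terminal action from Assumption~\ref{assm:TerminalCost_and_Set}, then check that this candidate satisfies every constraint of Problem~\ref{prob:KDPC}. Let $(\Delta\mathbf{u}^*_k,\boldsymbol{\rho}^*_k,\mathbf{z}^*_k)$ be an optimal, hence feasible, solution at time $k$, with entries $\Delta u^*_{i|k}$, $\rho^*_{i|k}$, $z^*_{i|k}$. Because the exact-approximation hypothesis stated before Assumption~\ref{assm:TerminalCost_and_Set} is in force, I work directly with the velocity dynamics~\eqref{eqn:velocity}. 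I would define the candidate at $k+1$ by the shift
\begin{align*}
\Delta u_{i|k+1}&:=\Delta u^*_{i+1|k},\quad \rho_{i|k+1}:=\rho^*_{i+1|k},\quad i\in[0,N-2],\\
z_{i|k+1}&:=z^*_{i+1|k},\quad i\in[0,N-1],
\end{align*}
completed by the terminal triple $\Delta u_{N-1|k+1}:=K(z^*_{N|k}-r)$, $\rho_{N-1|k+1}:=\Tilde{\rho}_{N-1|k+1}$, and $z_{N|k+1}$ obtained from one closed-loop velocity step. By hypothesis the initial data $z_{0|k+1}=z^*_{1|k}$ and $\rho_{0|k+1}=\rho^*_{1|k}$ match the shifted quantities, so~\eqref{eq:k3:1d} holds automatically.

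Next I would verify the prediction-model equality~\eqref{eq:k3:1b}. For the first $N-1$ stages this is immediate from the shift: the identity $z^*_{i+2|k}=\hat A(\rho^*_{i+1|k})z^*_{i+1|k}+\hat B(\rho^*_{i+1|k})\Delta u^*_{i+1|k}$ held at time $k$, and under the relabeling it is precisely~\eqref{eq:k3:1b} for the candidate. The crucial step is the terminal stage, where I must show $z_{N|k+1}-r=\textbf{A}_{cl}(\Tilde{\rho}_{N-1|k+1})(z^*_{N|k}-r)$. Substituting $\Delta u_{N-1|k+1}=K(z^*_{N|k}-r)$ into~\eqref{eqn:velocity} evaluated at $\Tilde{\rho}_{N-1|k+1}$ gives $z_{N|k+1}=A z^*_{N|k}+BK(z^*_{N|k}-r)$. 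The key structural fact is that the reference is a fixed point of the drift, $A(\cdot)r=r$, which holds because the lower block of $r=\col(y_r,\mathbf 0)$ is zero and the upper-left block of $A$ in~\eqref{eqn:velocity} is the identity; subtracting $r$ and using $Ar=r$ collapses the expression to $(A+BK)(z^*_{N|k}-r)=\textbf{A}_{cl}(\Tilde{\rho}_{N-1|k+1})(z^*_{N|k}-r)$, confirming that $\Tilde{\rho}_{N-1|k+1}$ is exactly the consistent evaluation point.

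It then remains to check the inequality constraints~\eqref{eq:k3:1e}. The shifted states $z_{i|k+1}=z^*_{i+1|k}$, $i\in[1,N-1]$, and shifted increments $\Delta u_{i|k+1}=\Delta u^*_{i+1|k}$, $i\in[0,N-2]$, inherit $z\in\Zset$ and $\Delta u\in\Delta\Uset$ from feasibility at time $k$. Since $z^*_{N|k}-r\in\Zset_T$ by the terminal constraint at $k$, Assumption~\ref{assm:TerminalCost_and_Set} yields $\Delta u_{N-1|k+1}=K(z^*_{N|k}-r)\in K\Zset_T\subseteq\Delta\Uset$; the invariance $\textbf{A}_{cl}(\Tilde{\rho}_{N-1|k+1})\Zset_T\subseteq\Zset_T$ gives $z_{N|k+1}-r\in\Zset_T$, so the shifted terminal constraint is met; and $\Zset_T\subseteq\Zset$ gives $z_{N|k+1}\in\Zset$. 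Hence the candidate satisfies all constraints and Problem~\ref{prob:KDPC} is feasible at $k+1$.

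The main obstacle I anticipate is the velocity-form-specific bookkeeping of the evaluation points $\rho$: unlike ordinary MPC, the prediction matrices depend on the decision variables $\rho_{i|k}$ through the gradients in~\eqref{eqn:velocity}, so the shift must carry these along consistently and the terminal point must be chosen as the controller-induced $\Tilde{\rho}_{N-1|k+1}$ from Assumption~\ref{assm:TerminalCost_and_Set} for the closed-loop identity to close. The second delicate point is that the argument is genuinely nominal: it is the hypothesis that the initial condition at $k+1$ equals the predicted $z^*_{1|k}$, together with the exact-approximation assumption, that keeps the shifted prediction dynamically consistent; relaxing either would break the shift and require the robust analysis invoked in Remark~\ref{remark:4.7}.
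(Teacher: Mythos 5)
Your proposal is correct and follows essentially the same route as the paper: shift the optimal sequences $(\Delta\mathbf{u}^*_k,\boldsymbol{\rho}^*_k,\mathbf{z}^*_k)$, append the terminal action $K(z^*_{N|k}-r)$ with evaluation point $\Tilde{\rho}_{N-1|k+1}$, and use the three set conditions of Assumption~\ref{assm:TerminalCost_and_Set} to verify \eqref{eq:k3:1b}--\eqref{eq:k3:1e}. In fact your explicit use of the fixed-point identity $A(\cdot)r=r$ to pass to error coordinates, yielding $\tilde z_{N|k+1}-r=\textbf{A}_{cl}(\Tilde{\rho}_{N-1|k+1})(z^*_{N|k}-r)\in\Zset_T$, makes precise a step the paper's proof glosses over (it writes $\tilde z_{N|k+1}=\textbf{A}_{cl}(\cdot)z^*_{N|k}\in\Zset_T$ without accounting for the reference offset), so your argument is, if anything, slightly more careful than the published one.
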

\begin{proof}
Consider the optimal predicted sequences at %discrete--time instant 
$k$,
\begin{align*}
    \bu_{k}^* &= \{u^*_{0|k},u^*_{1|k},\dots,u^*_{N-2|k}, u^*_{N-1|k}\}, \\
    \boldsymbol{\rho}_{k}^* &= \{\rho^*_{0|k},\rho^*_{1|k},\dots,\rho^*_{N-2|k}, \rho^*_{N-1|k}\},\\
    \bz^*_{k} &= \{z^*_{1|k},z^*_{2|k},\dots,z^*_{N-1|k}, z^*_{N|k}\}.
\end{align*}
Then, at time $k+1$, construct the sub-optimal sequences:
\begin{align*}
    \Tilde{\bu}_{k+1} &= \{u^*_{1|k},u^*_{2|k},\dots,u^*_{N-1|k}, \Tilde{u}_{N-1|k+1}\}, \\
    \Tilde{\boldsymbol{\rho}}_{k+1} &= \{\rho^*_{1|k},\rho^*_{2|k},\dots,\rho^*_{N-1|k}, \Tilde{\rho}_{N-1|k+1}\}, \\
    \Tilde{\bz}_{k+1} &= \{z^*_{2|k},z^*_{3|k},\dots,z^*_{N|k}, \Tilde{z}_{N|k+1}\}, 
\end{align*}
where $\tilde u_{N-1|k+1}:= u^*_{N-1|k} + K(z^*_{N|k}-r)$ and $\Tilde{\rho}_{N-1|k+1} := \col(x^*_{N|k},u^*_{N-1|k} + K(z^*_{N|k}-r))$ . This results in a sub-optimal, but feasible initial condition  $\rho_{0|k+1}=\rho^*_{1|k}$, $z_{0|k+1}=z^*_{1|k}$, as the shifted sub-optimal sequence $\Tilde{\bu}_{k+1}$ satisfies all the constraints in \eqref{eq:k3:1}, due to Assumption~\ref{assm:TerminalCost_and_Set} and the fact that
{\footnotesize
\begin{align*}
    &\tilde z_{N|k+1} = \\
&= \prod_{i=0}^{N-1} A(\Tilde{\rho}_{i|k}) z^*_{1|k}  \\
& \quad + \begin{bmatrix}
    \prod_{i=1}^{N-1} A(\Tilde{\rho}_{i|k})B(\Tilde{\rho}_{0|k}) & \dots & B(\Tilde{\rho}_{N-1|k})
\end{bmatrix}\Tilde{\mathbf{u}}_{k+1} \\
&= A(\Tilde{\rho}_{N-1|k}) \left( \prod_{i=1}^{N-1} A(\rho^*_{i|k}) z^*_{1|k} +
    \prod_{i=2}^{N-1} A(\rho^*_{i|k})B(\rho^*_{1|k})u^*_{1|k} + \dots 
\right)\\
& \quad + B(\Tilde{\rho}_{N-1}|k)\Tilde{u}_{N-1|k} \\
&= A(\Tilde{\rho}_{N-1|k}) \left( \prod_{i=0}^{N-1} A(\rho^*_{i|k}) z^*_{0|k} +
    \prod_{i=1}^{N-1} A(\rho^*_{i|k})B(\rho^*_{1|k})u^*_{0|k} + \dots 
\right)\\
& \quad + B(\Tilde{\rho}_{N-1}|k)\Tilde{u}_{N-1|k} \\
&= A(\Tilde{\rho}_{N-1|k})z^*_{N|k} + B(\Tilde{\rho}_{N-1|k})\Tilde{u}_{N-1|k} \\
&= \textbf{A}_{cl}(x^*_{N|k},u^*_{N-1|k} + K(z^*_{N|k}-r)) z^*_{N|k} \in \Zset_T.
\end{align*}}

The last statement holds due to Assumption~\ref{assm:TerminalCost_and_Set}, which completes the proof. 
\end{proof}

Next, let $\Zset_f(N)$ denote the set of feasible initial states, i.e., $z_{0|k}$, which is positively invariant by Theorem~\ref{theorem:RecursiveFeasibility} for the closed-loop system \eqref{eqn:velocity}-\eqref{eq:k3:1}. Moreover, we define the optimal value function at time $k$ corresponding to \eqref{eq:k3:1a}, i.e., 
\begin{align*}
V(z_{0|k}^\ast,r)&:=J(\mathbf{z}^*_{k}, r, \Delta \mathbf{u}^*_{k},\boldsymbol{\rho}^*_{k})\\
&=l_N(z_{N|k}^\ast,r) + \sum_{i=0}^{N-1} l(z_{i|k}^\ast,r,\Delta u_{i|k}^\ast).
\end{align*} 
\begin{theorem}\label{thm:as}
    Suppose that Assumption~\ref{assm:TerminalCost_and_Set} holds. Then for all $z^*_{0|k}\in\Zset_f(N)$ it holds that system \eqref{eqn:velocity} in closed-loop system with the vKDPC control law obtained by solving \eqref{eq:k3:1} is Lyapunov asymptotically stable with respect to the equilibrium point $r=\col(y_r,\mathbf{0}_{n\times 1})$.
\end{theorem}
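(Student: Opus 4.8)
The plan is to use the optimal value function $V(z_{0|k}^\ast,r)$ as a Lyapunov function for the closed loop and to verify the three standard ingredients: a positive-definite lower bound, a local upper bound near $r$, and a strict decrease along closed-loop trajectories. Because of the standing assumption that the kernelized prediction matrices $\hat\Psi,\hat\Gamma$ coincide with $\Psi,\Gamma$ (i.e.\ with the true velocity dynamics \eqref{eqn:velocity}), the receding-horizon update is exact, so that $z_{0|k+1}=z^\ast_{1|k}$ and the nominal MPC machinery applies directly. First I would observe that $r=\col(y_r,\mathbf{0})$ is an equilibrium: since the lower block of $r$ vanishes, $A(\cdot)r=r$, and $K(r-r)=0$, so the all-zero increment sequence is feasible and optimal at $z_{0|k}=r$, giving $V(r,r)=0$.

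For the lower bound, since $Q\succ0$ and $V$ contains the stage term $l(z^\ast_{0|k},r,\Delta u^\ast_{0|k})$, one immediately gets $V(z^\ast_{0|k},r)\ge \lambda_{\min}(Q)\,\|z^\ast_{0|k}-r\|^2$. For the local upper bound, on the terminal set I would use the terminal law $\Delta u_i=K(z_i-r)$ as a feasible candidate over the whole horizon: Assumption~\ref{assm:TerminalCost_and_Set} guarantees invariance of $\Zset_T$ and constraint satisfaction ($K\Zset_T\subseteq\Delta\Uset$, $\Zset_T\subseteq\Zset$), while the matrix inequality makes the quadratic terminal cost an upper bound on the accumulated cost of this candidate. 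This yields $V(z^\ast_{0|k},r)\le \lambda_{\max}(P)\,\|z^\ast_{0|k}-r\|^2$ whenever $(z^\ast_{0|k}-r)\in\Zset_T$, i.e.\ a class-$\mathcal{K}$ sandwich on a neighborhood of $r$.

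The decrease is the crux. I would reuse the shifted feasible candidate $(\Tilde{\bu}_{k+1},\Tilde{\boldsymbol{\rho}}_{k+1},\Tilde{\bz}_{k+1})$ from the proof of Theorem~\ref{theorem:RecursiveFeasibility} and invoke optimality, $V(z^\ast_{1|k},r)\le \tilde J$, where $\tilde J$ is the cost of that candidate. The first $N-1$ shifted stages reproduce the optimal stages $l(z^\ast_{i|k},\cdot)$, $i=1,\dots,N-1$, which cancel against the matching terms of $V(z^\ast_{0|k},r)$, leaving only boundary terms:
\begin{align*}
\tilde J - V(z^\ast_{0|k},r) &= l_N(\tilde z_{N|k+1},r) - l_N(z^\ast_{N|k},r) \\
&\quad + l\bigl(z^\ast_{N|k},r,K(z^\ast_{N|k}-r)\bigr) - l(z^\ast_{0|k},r,\Delta u^\ast_{0|k}).
\end{align*}
Writing $e:=z^\ast_{N|k}-r$ and using $A(\cdot)r=r$ to obtain the exact identity $\tilde z_{N|k+1}-r=\mathbf{A}_{cl}e$, the first three terms collapse to $e^\top(\mathbf{A}_{cl}^\top P\mathbf{A}_{cl}-P+Q+K^\top R K)e\le 0$ by the matrix inequality in Assumption~\ref{assm:TerminalCost_and_Set}. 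Hence $V(z^\ast_{1|k},r)-V(z^\ast_{0|k},r)\le -l(z^\ast_{0|k},r,\Delta u^\ast_{0|k})\le -\lambda_{\min}(Q)\,\|z^\ast_{0|k}-r\|^2$, the required strict decrease on $\Zset_f(N)$.

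Finally I would combine the sandwich bounds, the strict decrease, and the positive invariance of $\Zset_f(N)$ (Theorem~\ref{theorem:RecursiveFeasibility}) to invoke a standard discrete-time Lyapunov theorem, concluding Lyapunov asymptotic stability of $r$ with region of attraction $\Zset_f(N)$: the upper bound delivers stability while the decrease delivers attractivity. The main obstacle I anticipate is the bookkeeping around the affine offset, since the closed-loop velocity map $z\mapsto A z+BK(z-r)$ is affine; one must carefully verify $\tilde z_{N|k+1}-r=\mathbf{A}_{cl}(z^\ast_{N|k}-r)$ \emph{exactly} (this is precisely where $A(\cdot)r=r$ is essential) so that the quadratic terminal inequality applies cleanly. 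A secondary subtlety is securing the local upper bound on $V$, which is what upgrades mere attractivity to genuine Lyapunov asymptotic stability.
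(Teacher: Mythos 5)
Your proposal is correct and follows essentially the same route as the paper's proof: use the optimal value function as a Lyapunov candidate, bound the decrease via the shifted suboptimal sequence from Theorem~\ref{theorem:RecursiveFeasibility} together with the terminal matrix inequality in Assumption~\ref{assm:TerminalCost_and_Set}, establish the standard class-$\mathcal{K}$ sandwich bounds (the paper cites \cite{MAYNE2000789} for the upper bound where you construct it explicitly from the terminal law), and conclude by standard Lyapunov arguments. Your explicit verification that $A(\cdot)r=r$ so that $\tilde z_{N|k+1}-r=\textbf{A}_{cl}(z^*_{N|k}-r)$ holds exactly is a welcome refinement of a step the paper leaves implicit, but it does not change the overall argument.
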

\begin{proof}
    Since $z^*_{0|k}\in\Zset_f(N)$, by Theorem~\ref{theorem:RecursiveFeasibility}, Problem~\ref{prob:KDPC} remains recursively feasible for all $k\geq 1$. Consider the sub-optimal initial state $\Tilde{z}_{0|k+1}=z^*_{1|k}$ and the corresponding state, input and scheduling variable sequences $\Tilde{\mathbf{z}}_{k+1}$, $\Delta \Tilde{\mathbf{u}}_{k+1}$ and $\Tilde{\boldsymbol{\rho}}_{k+1}$. Then by optimality of the cost function it holds that 
    {\footnotesize 
    \begin{align*}
        V&(z^*_{0|k+1},r) - V(z^*_{0|k},r) \\
        & \leq J(\Tilde{\mathbf{z}}_{k+1},r, \Delta \Tilde{\mathbf{u}}_{k+1}, \Tilde{\boldsymbol{\rho}}_{k+1}) - V(z^*_{0|k},r) \\
        &= (z^*_{N|k}-r)^\top(\textbf{A}^\top_{cl}(\Tilde{\rho}_{N-1|k+1}) P\textbf{A}_{cl}(\Tilde{\rho}_{N-1|k+1})-P)(z^*_{N|k}-r) \\
        &+(z^*_{N|k}-r) (Q+K^\top R K)(z^*_{N|k}-r) \\
        &-(z_{0|k}^{\ast}-r)^{\top} Q (z_{0|k}^{\ast}-r) -\Delta u_{0|k}^{\ast\top} R \Delta u_{0|k}^{\ast} \\
        &\leq -(z_{0|k}^{\ast}-r)^{\top} Q (z_{0|k}^{\ast}-r).
    \end{align*}}

Then, by exploiting standard MPC stability analysis techniques, suitable positive definite upper and lower bounds on $V(z_{0|k}^\ast, r)$ can be established, i.e.
$$
\alpha_1(\| z_{0|k}^{\ast}-r \|_2^2) \leq V(z_{0|k}^\ast, r) \leq \alpha_2(\| z_{0|k}^{\ast}-r \|_2^2),
$$
where $\alpha_1(s) = \lambda_{\text{min}}(Q)s$. For the computation of $\alpha_2(s)$ we refer to e.g.\cite{MAYNE2000789}. This further yields the desired property via standard Lyapunov arguments.
\end{proof}
\begin{remark}\label{remark:4.7}
When the kernelized velocity data-driven model \eqref{eqn:model} is not exact, then via the framework of \cite{dejong2024}, robust recursive feasibility can still be guaranteed by allowing $z_{0|k}$ to be a (suitably parameterized) free optimization variable. The asymptotic stability of system \eqref{eqn:velocity} in closed-loop with inexact vKDPC as in \eqref{eq:k3:1} is then translated into input-to-state stability.
\end{remark}

%Even if the kernelized velocity prediction model \eqref{eqn:model} is not equivalent to the underlying dynamics \eqref{eqn:state_space}, we will show in section~\ref{sec:example} that the developed vKDPC algorithm succeeds in achieving offset-free performance. 

\begin{remark}[On the computation of terminal ingredients]\label{rem:terminal}
Assuming that a desired triplet $(x_r, u_r, y_r)$ is known, finding the terminal weight $P \succ 0$ and gain $K$ amounts to solving 
\begin{align*}
    (A(x_r,u_r)+B(x_r,u_r)K)^\top P &(A(x_r,u_r)+B(x_r,u_r)K)\\
    -P & \preceq -Q -K^\top R  K.
\end{align*}
This can be done by using the kernel based approximate matrices $\Hat{A}(x_r,u_r)$, $\Hat{B}(x_r,u_r)$ and $\Hat{C}(x_r,u_r)$, which are constant matrices after substituting $x_r$ and $u_r$. By pre- and post-multiplying with $P^{-1}$, defining the variables $Y=K P^{-1}$ and $O= P^{-1}$ and applying the Schur complement a standard MPC-LMI is obtained. Once the feedback gain $K$ is computed, a terminal set can then be computed by standard methods for the dyanmics $(\hat A(x_r,u_r)+\hat B(x_r, u_r)K)$ \cite{Kernels_control_Claudio}.
% \begin{align*}
%         \begin{bsmallmatrix}
%             O & (\Hat{A}(x_r,u_r)O+\Hat{B}(x_r,u_r)Y)^\top & O & Y^\top \\
%             (\Hat{A}(x_r,u_r)O+\Hat{B}(x_r,u_r)Y) & O & \mathbf{0} & \mathbf{0} \\
%             O & \mathbf{0} & Q^{-1} & \mathbf{0} \\
%             Y & \mathbf{0} & \mathbf{0} & R^{-1}
%         \end{bsmallmatrix} \succ 0.
% \end{align*}
\end{remark}
% Future work will consider computation of kernelized terminal ingredients, which are valid for a  set of references, e.g., using techniques from 
% The terminal cost and controller are then given by $P = O^{-1}$ and $K=Y O^{-1}$. \textcolor{red}{Can we do this since we defined $u_k:=K(z_k-r)$ in Assumption~\ref{assm:TerminalCost_and_Set}? This is a bit weird since $z_r = r$ so this implies that $u_r=0$. At first I did the proofs without $r$ with $u_k = K z_k$ but then we only have a stability guarantee with respect to the origin. }

\section{Illustrative Example}\label{sec:example}
To assess the effectiveness of the developed vKDPC scheme, we consider the discretized nonlinear pendulum model from \cite{dejong2024}, i.e.,
\begin{equation}\label{eqn:example}
    \begin{split}
        \begin{bmatrix}
            x_1(k+1) \\
            x_2(k+1)     
        \end{bmatrix} &= 
        \begin{bmatrix}
            1-\frac{b T_s}{J} & 0 \\
            T_s & 1    
        \end{bmatrix}  \begin{bmatrix}
            x_1(k) \\
            x_2(k)     
        \end{bmatrix} +  \begin{bmatrix}
            \frac{T_s}{J} \\
            0    
        \end{bmatrix} u(k) \\
         &- 
        \begin{bmatrix}
            \frac{M L g T_s}{2 J} \sin(x_2(k)) \\
             0
        \end{bmatrix} + \begin{bmatrix}
            1 \\
            0    
        \end{bmatrix} d(k), \\
        y(k) &= x_2(k),
    \end{split}
\end{equation}
where $u(k)$ and $y(k)$ are the system input torque and
pendulum angle at time instant $k$, while $J = \frac{ML^2}{3}$, $M = 1$kg and $L = 1$m are the moment of inertia, mass and length of the pendulum. Moreover, $g = 9.81$m/s$^2$ is the gravitational acceleration, $b = 0.1$ is the friction coefficient and the sampling time is $T_s = 1/30$s.
\begin{figure}[b!]
  \centering
\includegraphics[scale=0.7,trim=0.0cm 0.0cm 0.0cm 0.0cm,clip]{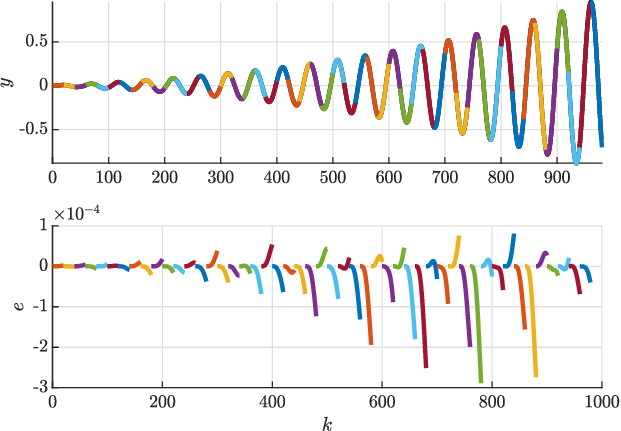}\vspace{-.3cm}
  \caption{$N=20$ step $y$ prediction of kernel model for test data plotted together with the true system trajectory and the error $e:= y-\Hat{y}$.}
  \label{fig:validation}
\end{figure}
In order to learn a kernel based velocity model of the form \eqref{eqn:model} we use the universal \cite{Micchelli_2006} inverse multiquadric kernel function
$k(x_i,x_j) = \frac{1}{\sqrt{1+\frac{1}{\sigma^2}\|x_i-x_j\|^2_2 }}$, with $\sigma = \sqrt{200}$. We perform an open-loop identification experiment where the input signal is generated by dithering a piecewise constant input varying between $\pm 1$ with a multi-sine input constructed with the Matlab function idinput, with the parameters Range [$-0.2, 0.2$], Band [$0, 1$], NumPeriod $1$ and Sine [$25, 40, 1$]. Using the obtained state-input-output data we find the matrices of coefficients $A_\alpha^{LS}$, $B_\alpha^{LS}$ and $C_\alpha^{LS}$ by solving the least squares problems \eqref{eqn:LS_solution1} and \eqref{eqn:LS_solution2} and we construct the multi-step prediction matrices \eqref{eqn:prediction_model1} and \eqref{eqn:prediction_model2}. The total computation time for solving \eqref{eqn:LS_solution1} and \eqref{eqn:LS_solution2} is $2.0064$s for a $2000$ samples dataset. To validate the obtained model we apply a multi-sine input with the same settings to the obtained prediction model and compare the multistep prediction with the test data shown in the first subplot of \figurename{\ref{fig:validation}}. 
\begin{figure}[t!]
  \centering
\includegraphics[scale=0.75,trim=0.0cm 0.0cm 0.0cm 0.0cm,clip]{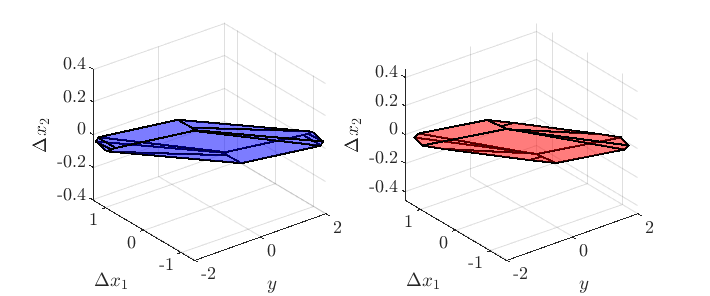}\vspace{-.3cm}
  \caption{Terminal set $\mathbb{Z}_T$ for $r = \col(0.5,0,0)$ obtained using the kernelized velocity model (\textcolor{blue}{blue}) and analytic velocity model (\textcolor{red}{red}).}
  \label{fig:terminal}
\end{figure}
\begin{figure}[b!]
  \centering
\includegraphics[scale=0.8,trim=0.0cm 0.0cm 0.0cm 0.0cm,clip]{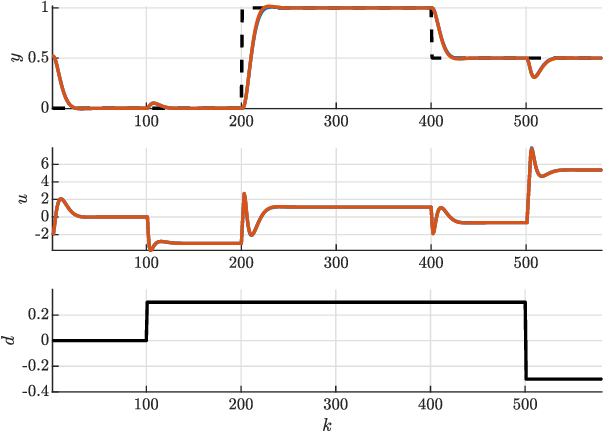}\vspace{-.3cm}
  \caption{vKDPC (\textcolor{Matlab_orange}{\textbf{---}}) vs. vNMPC (\textcolor{Matlab_blue}{\textbf{---}}) trajectories and disturbance signal.}
  \label{fig:trajectories}
\end{figure}
After constructing the multi-step prediction matrices we can directly implement the vKDPC algorithm as formulated in Problem~\ref{prob:KDPC}. 
For the cost function we use $Q = 1000 I_3$ and $R=10$ with a prediction horizon of $N=20$ and tolerance $\varepsilon=1e-8$. The constraint sets are $\mathbb{Z} = \{z : -\col(2,2,2) \leq z \leq \col(2,2,2)\}$ and $\Delta \mathbb{U} = \{\Delta u : -2 \leq \Delta u \leq 2\}$. The terminal cost $P$ is computed as explained in Remark~\ref{rem:terminal} where $x_r = \col(0,y_r)$ and $u_r=\frac{MgL}{2}\sin(y_r)$. % are determined by analytically calculating the velocity model for \eqref{eqn:example} using \eqref{eqn:velocity} and then computing the steady state input and state. 
Since both $x_r$ and $u_r$ depend on the reference signal we have to recompute $P$ when $r$ changes. This amounts to $3$ terminal costs for the example shown in \figurename{\ref{fig:trajectories}}. For every $P$ a corresponding control gain $K$ is computed as explained in Remark~\ref{rem:terminal} and a terminal set $\mathbb{Z}_T$ is computed that fulfills the conditions from Assumption~\ref{assm:TerminalCost_and_Set}. The terminal set is computed using the MPT3 toolbox, see e.g. \cite{dejong2024} for more details about implementation. The terminal set for $r = \col(0.5,0,0)$ is shown in \figurename{\ref{fig:terminal}}.

Next, we consider a piece-wise constant reference $r$ that changes 2 times and a piece-wise constant additive disturbance that is initially zero and then it varies between 2 non-zero values. To validate the performance of the developed vKDPC algorithm we computed the velocity form model \eqref{eqn:velocity} for the pendulum model \eqref{eqn:example} as in \cite[Chapter 4]{cisneros2021quasi} and implemented the corresponding vNMPC algorithm as in Problem~\ref{prob:NMPC}. 

For vNMPC, the analytic velocity model is used to compute $P$ and $\mathbb{Z}_T$  with the same $Q$, $R$, $\mathbb{Z}$ and $\Delta \mathbb{U}$ and $N$ for the matrices $A(x_r,u_r), B(x_r,u_r)$. From \figurename{\ref{fig:terminal}} we observe that the two terminal sets are very similar. We solved both the vNMPC and the vKDPC algorithms using a quasi-LPV formulation and sequential QP. On average the required number of iterations for convergence in Algorithm~\ref{alg:vKDPC_LPV} is equal to $4.0500$ vs $3.2552$ iterations with an average computation time per sampling instant of $0.0683$s vs $0.0438$s for vKDPC and vNMPC respectively, while the offset-free tracking performance of the two algorithms is almost identical. These results demonstrate the effectiveness of the developed data-driven vKDPC algorithm, both in terms of capturing the velocity form nonlinear dynamics and the corresponding offset-free properties, and in terms of efficient online implementation. 
\newpage 
\section{Conclusions}
In this paper we developed a novel kernelized offset-free data-driven predictive control scheme for nonlinear systems. By exploiting the structure of an analytic velocity state-space model, we reduced learning the kernelized velocity model to solving a least squares problem. The resulting offset-free kernelized DPC scheme can be efficiently implemented using sequential QP. We derived terminal cost and terminal set conditions for recursive feasibility and stability of velocity form nonlinear MPC and kernelized DPC for exact kernelized representations. Future work will consider conditions for time-varying references and inexact kernelized velocity models. 

% Generated by IEEEtran.bst, version: 1.14 (2015/08/26)

\end{document}